\begin{document}

\title{Authentication and Secrecy Codes for Equiprobable Source Probability Distributions}

\author{
\authorblockN{Michael Huber}
\authorblockA{Wilhelm-Schickard-Institute for Computer Science\\
University of Tuebingen \\
Sand~13, 72076 Tuebingen, Germany\\
Email: michael.huber@uni-tuebingen.de}
}

\newtheorem{corollary}{Corollary}
\newtheorem{theorem}{Theorem}
\newtheorem{lemma}{Lemma}
\newtheorem{problem}{Problem}
\newtheorem{example}{Example}
\newtheorem{remark}{Remark}

\maketitle

\begin{abstract}
We give new combinatorial constructions for codes providing authentication and secrecy for equiprobable source probability distributions.
In particular, we construct an \mbox{infinite} class of optimal authentication codes which are multiple-fold \mbox{secure} against spoofing and simultaneously achieve perfect \mbox{secrecy}. Several further new optimal codes satisfying these properties will also be constructed and presented in general tables. Almost all of these appear to be the first authentication codes with these properties.
\end{abstract}

\section{Introduction}

The construction of authentication codes is an important topic in cryptography, and has been considered by
many researchers over the last few decades. The first construction of such codes go back to Gilbert, MacWilliams and Sloane~\cite{gil74}, using finite projective planes.

In this paper, we consider combinatorial constructions for codes providing authentication and secrecy for equiprobable source probability distributions. For general authentication codes without any secrecy requirements, there exist various constructions for a long time, regardless of the source distribution. However, if we wish that the authentication codes simultaneously provide for secrecy, then there are only a few constructions known, see e.g.~\cite{ding04,Mass86,pei06,Stin90}. These constructions are mostly of combinatorial nature, using combinatorial $t$-designs, perpendicular arrays, or orthogonal arrays. An algebraic approach~\cite{ding04} is based on (non-)linear functions between finite Abelian groups.
In particular, Stinson~\cite{Stin90} constructed in 1990 optimal authentication codes that are one-fold secure against spoofing and achieve perfect secrecy. His constructions rely on Steiner \mbox{$2$-designs} and assume that the source states are equiprobable distributed (cf.~Theorems~\ref{stin1} and~\ref{stin2}).

We will extend Stinson's constructions to obtain optimal codes which are multi-fold secure against spoofing and provide perfect secrecy. This can be achieved by means of Steiner \mbox{$t$-designs} for larger $t$. Using M\"{o}bius planes, and more generally spherical geometries, we will particularly construct a new infinite class of optimal codes which are two-fold secure against spoofing and achieve perfect secrecy (Section~\ref{new}).
Several further new optimal codes satisfying these properties will be constructed and presented in general tables (Section~\ref{table}).
Almost all of these appear to be the first authentication codes with these properties.

The paper uses concepts from both combinatorial design theory and the theory of authentication codes. Relevant definitions will be summarized
(Sections~\ref{auth} and~\ref{des}) as well as results on general authentication codes that are important for our purposes (Section~\ref{genauth}). The paper concludes with a discussion on further research directions (Section~\ref{disc}).

\section{Authentication and Secrecy Model}\label{auth}

We rely on the unconditional (theoretical) secrecy model developed by Shannon~\cite{Shan49}, and by Simmons (e.g.~\cite{Sim85}) including authentication. We follow the description and notion of~\cite{Mass86,Stin90}. We also mention the recent reference work~\cite{pei06}.

In this model of authentication and secrecy three participants are involved: a \emph{transmitter}, a \emph{receiver}, and an \emph{opponent}.  The transmitter wants to communicate information to the receiver via a public communications channel. The receiver in return would like to be confident that any received information actually came from the transmitter and not from some opponent (\emph{integrity} of information). The transmitter and the receiver are assumed to trust each other. Sometimes this is also called an \emph{$A$-code}.

Let $\mathcal{S}$ denote a set of $k$ \emph{source states} (or \emph{plaintexts}), $\mathcal{M}$ a set of $v$ \emph{messages} (or \emph{ciphertexts}), and $\mathcal{E}$ a set of $b$ \emph{encoding rules} (or \emph{keys}). Using an encoding rule $e\in \mathcal{E}$,
the transmitter encrypts a source state $s \in \mathcal{S}$ to obtain the message $m=e(s)$ to be sent over the channel. The encoding rule is an injective function from $\mathcal{S}$ to $\mathcal{M}$, and is communicated to the receiver via a secure channel prior to any messages being sent. For each encoding rule $e \in \mathcal{E}$, let $M(e):=\{e(s) : s \in \mathcal{S}\}$ denote the set of \emph{valid} messages. A received message $m$ will be accepted by the receiver as being authentic if and only if $m \in M(e)$. When this is fulfilled, the receiver decrypts the message $m$ by applying the decoding rule $e^{-1}$, where \[e^{-1}(m)=s \Leftrightarrow e(s)=m.\]
An authentication code can be represented algebraically by a $(b \times k)$\emph{-encoding matrix} with the rows indexed by the encoding rules, the columns indexed by the source states, and the entries defined by $a_{es}:=e(s)$ ($1\leq e \leq b$, $1\leq s \leq k$).

\subsection{Spoofing Attack and Perfect Secrecy}

We are interested in the following scenario, which is called
\emph{spoofing attack} of order $i$: 
Suppose that an opponent observes $i\geq 0$ distinct messages, which are sent through the public channel using the same encoding rule. The opponent then inserts a new message $m'$ (being distinct from the $i$ messages already sent), hoping to have it accepted by the receiver as authentic.
The cases $i=0$ and $i=1$ are called \emph{impersonation game} and \emph{substitution game}, respectively. These cases have been studied in detail in recent years, whereas less is known for the cases $i \geq 2$. 

We assume that there are probability distributions $p_S$ on $\mathcal{S}$ and $p_E$ on $\mathcal{E}$ (known to all participants) with associated independent random variables $S$ and $E$, respectively. These distributions induce a third distribution, $p_M$, on $\mathcal{M}$ with associated random variable $M$. The \emph{deception probability} $P_{d_i}$ is the probability that the opponent can deceive the receiver with a spoofing attack of order $i$. 

\begin{theorem}$[$Massey$]$
In an authentication code with $k$ source states and $v$ messages, the deception probabilities are bounded below by
\[P_{d_i}\geq \frac{k-i}{v-i}.\]
\end{theorem}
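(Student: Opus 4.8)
The plan is to bound $P_{d_i}$ by a simple averaging argument: for each fixed collection of observed messages, the opponent's optimal choice of $m'$ does at least as well as the \emph{average} admissible choice, and this average can be computed exactly from the combinatorial structure of the code.

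First I would fix notation for the opponent's strategy. Suppose the opponent has observed a set $\overline{m} = \{m_1,\dots,m_i\}$ of $i$ distinct messages transmitted under a common (unknown) encoding rule $E$. For each candidate message $m' \notin \overline{m}$, define its \emph{payoff} to be the conditional probability that inserting it deceives the receiver,
$$\mathrm{payoff}(m';\overline{m}) := \Pr\bigl(m' \in M(E) \,\big|\, m_1,\dots,m_i \text{ observed}\bigr).$$
Since the opponent plays optimally (choosing the best $m'$ for each observation), the deception probability is the expectation over the observed messages of the largest payoff,
$$P_{d_i} = \sum_{\overline{m}} p(\overline{m}) \, \max_{m' \notin \overline{m}} \mathrm{payoff}(m';\overline{m}).$$

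Next I would sum the payoffs over all $v-i$ candidate messages and recognize the total as a conditional expectation; exchanging summation and expectation gives
$$\sum_{m' \notin \overline{m}} \mathrm{payoff}(m';\overline{m}) = \mathbb{E}\Bigl[\, \bigl| M(E) \setminus \overline{m} \bigr| \,\Big|\, m_1,\dots,m_i \text{ observed} \Bigr].$$
Here lies the crucial step. Because every encoding rule is \emph{injective} from the $k$ source states into $\mathcal{M}$, each valid-message set satisfies $|M(e)| = k$; and because the $i$ distinct messages in $\overline{m}$ were actually transmitted under the true rule $E$, they arise from $i$ distinct source states and hence all lie in $M(E)$. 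Thus, conditioned on the observation, $\overline{m} \subseteq M(E)$ with $|\overline{m}| = i$, so $\bigl|M(E)\setminus\overline{m}\bigr| = k-i$ holds with certainty, and the sum of payoffs equals exactly $k-i$.

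Finally, since there are $v-i$ candidate messages, the maximum payoff is at least their average,
$$\max_{m' \notin \overline{m}} \mathrm{payoff}(m';\overline{m}) \geq \frac{1}{v-i}\sum_{m' \notin \overline{m}} \mathrm{payoff}(m';\overline{m}) = \frac{k-i}{v-i},$$
and substituting this into the expression for $P_{d_i}$, together with $\sum_{\overline{m}} p(\overline{m}) = 1$, yields the claimed inequality. I expect the main obstacle to be the counting step above: one must argue carefully that conditioning on the event ``these $i$ distinct messages were observed'' forces all of them to be valid under the actual key, so that precisely $k-i$ valid messages remain---this is exactly where injectivity of the encoding rules is essential. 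The remainder is the routine ``maximum exceeds average'' estimate followed by averaging over the observation.
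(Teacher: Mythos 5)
Your proof is correct: the paper itself states this result as Massey's theorem without giving a proof, and your argument is precisely the canonical one from the literature (condition on the observed messages, use injectivity of encoding rules so that exactly $k-i$ valid messages remain among the $v-i$ candidates, then bound the optimal payoff below by the average and take expectations). Nothing is missing; the step you flagged as crucial---that the $i$ observed messages necessarily lie in $M(E)$, forcing $\bigl|M(E)\setminus\overline{m}\bigr| = k-i$---is handled correctly.
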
 

An authentication code is called $t$\emph{-fold secure against spoofing} if
$P_{d_i}= (k-i)/(v-i)$ for $0 \leq i \leq t$.

In what follows, we are also interested in the property of secrecy:
An authentication code is said to have \emph{perfect secrecy} if
\[p_S(s | m)=p_S(s)\]
for every source state $s \in \mathcal{S}$ and every message $m \in \mathcal{M}$.
That is, the \emph{a posteriori} probability that the source state is $s$, given that the message $m$ is observed, is identical to
the \emph{a priori} probability that the source state is $s$. 
It can easily be shown that
\[p_S(s|m) =\frac{\sum_{\{e \in \mathcal{E}: e(s)=m\}} p_E(e)p_S(s)}{\sum_{\{e \in \mathcal{E}:m\in M(e)\}} p_E(e)p_S(e^{-1}(m))}.\]
As a consequence, we have
\begin{lemma}$[$Stinson$]$\label{frequency}
An authentication code has perfect secrecy if and only if
\[\sum_{\{e \in \mathcal{E}: e(s)=m\}} p_E(e)=\sum_{\{e \in \mathcal{E}:m\in M(e)\}} p_E(e)p_S(e^{-1}(m))\]
for every source state $s$ and every message $m$.
\end{lemma}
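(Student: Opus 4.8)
The plan is to argue directly from the definition of perfect secrecy, $p_S(s|m)=p_S(s)$, together with the explicit expression for $p_S(s|m)$ quoted immediately before the lemma, and to show by a short algebraic manipulation that this equality is equivalent to the stated frequency condition. Since the displayed formula for $p_S(s|m)$ is nothing but Bayes' rule, the whole matter reduces to simplifying one quotient.

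First I would observe that in the numerator of the given formula the factor $p_S(s)$ is independent of the encoding rule $e$ being summed over, so it may be pulled outside the sum. Abbreviating the denominator as $D:=\sum_{\{e\in\mathcal{E}:\,m\in M(e)\}} p_E(e)\,p_S(e^{-1}(m))$, the condition $p_S(s|m)=p_S(s)$ then reads
\[
\frac{p_S(s)\sum_{\{e\in\mathcal{E}:\,e(s)=m\}} p_E(e)}{D}=p_S(s).
\]
Next I would clear the denominator and cancel the common factor $p_S(s)$: cross-multiplying gives $p_S(s)\sum_{\{e:\,e(s)=m\}} p_E(e)=p_S(s)\,D$, and dividing out $p_S(s)$ yields exactly $\sum_{\{e:\,e(s)=m\}} p_E(e)=D$, which is the asserted identity. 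Reading this chain of equalities backwards establishes the converse, so the two conditions are equivalent for every $s$ and $m$.

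The one point requiring care — and the only genuine obstacle — is the legitimacy of the cancellation and the well-definedness of the conditional probability. Dividing by $p_S(s)$ is valid only when $p_S(s)\neq 0$, and $p_S(s|m)$ itself is meaningful only for messages that actually occur, that is, those with $p_M(m)=D>0$. I would therefore note at the outset that source states of probability zero may be discarded without loss of generality, so that $p_S(s)>0$ for all $s$, and I would restrict attention to messages $m$ with $p_M(m)>0$. For such $s$ and $m$ the manipulation above is rigorous, and the claimed equivalence follows.
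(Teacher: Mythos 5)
Your proposal is correct and follows essentially the same route as the paper, which states the lemma as an immediate consequence of the displayed Bayes-rule expression for $p_S(s|m)$: one cancels the factor $p_S(s)$ after setting $p_S(s|m)=p_S(s)$, exactly as you do. Your added remarks on discarding zero-probability source states and restricting to messages with $p_M(m)>0$ are a sensible (if routine) tightening of what the paper leaves implicit.
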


Thus, if the encoding rules in a code are used with equal probability, then a given message $m$ occurs with the same frequency in each column of the encoding matrix.

\section{Combinatorial Design Theory}\label{des}

For positive integers $t \leq k \leq v$ and $\lambda$, a
\mbox{\emph{$t$-$(v,k,\lambda)$ design}} $\mathcal{D}$ is a pair \mbox{$(X,\mathcal{B})$}, satisfying
the following properties:

\begin{enumerate}
\item[(i)] $X$ is a set of $v$ elements, called \emph{points},

\item[(ii)] $\mathcal{B}$ is a family of \mbox{$k$-subsets} of $X$, called \emph{blocks},

\item[(iii)] every \mbox{$t$-subset} of $X$ is contained in exactly $\lambda$ blocks.

\end{enumerate}
We will denote points by lower-case and blocks by upper-case Latin
letters.
Via convention, let $b:=\left| \mathcal{B} \right|$ denote the number of blocks.
Throughout this article, `repeated
blocks' are not allowed, that is, the same \mbox{$k$-subset}
of points may not occur twice as a block. If $t<k<v$ holds, then we
speak of a \emph{non-trivial} \mbox{$t$-design}.
For historical reasons, a \mbox{$t$-$(v,k,\lambda)$ design} with
$\lambda =1$ is called a \emph{Steiner \mbox{$t$-design}} (sometimes
also a \emph{Steiner system}).
As a simple example, let us choose as point set $X=\{1,2,3,4,5,6,7\}$ and as block set
$\mathcal{B}=\{\{1,2,4\},\{2,3,5\},\linebreak\{3,4,6\},\{4,5,7\},\{1,5,6\},\{2,6,7\},\{1,3,7\}\}.$
This gives a Steiner \mbox{$2$-$(7,3,1)$ design}, the well-known \emph{Fano
plane}, which is the smallest design arising from a
projective geometry. The usual representation of this unique projective plane of order $2$ is given by the
following diagram: 

\begin{figure}[htp]\label{fano}

\centering

\begin{tikzpicture}[scale=1,thick]

\filldraw [draw=black!100,fill=black!0]

(0,0.87) circle (24.4pt);

\filldraw [draw=black!100,fill=black!100]

(0,0) circle (2pt)  (0,-0.3) node {7}

(-1.5,0) circle (2pt) (-1.52,-0.3) node {1}

(1.5,0) circle (2pt) (1.52,-0.3) node {3}

(0,0.87) circle (2pt) (0.12,1.15) node {6}

(0,2.6) circle(2pt) (0,2.9) node {2}

(0.75,1.31) circle (2pt) (0.95,1.35) node {5}

(-0.75,1.31) circle(2pt) (-0.95,1.35) node {4};

\draw (1.5,0) -- (-1.5,0)

(1.5,0) -- (0,2.6)

(-1.5,0) -- (0,2.6)

(-1.5,0) -- (0.75,1.31)

(1.5,0) -- (-0.75,1.31)

(0,0) -- (0,2.6);

\end{tikzpicture}
\caption{The Fano plane}\label{fano}
\end{figure}
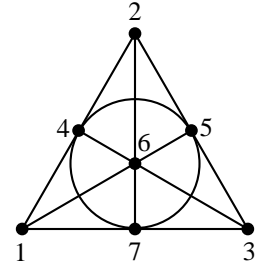

If a \mbox{$2$-design} has, like in this example, equally many points and blocks, i.e.
$v=b$, then we usually speak of a \emph{symmetric design}.
For the existence of \mbox{$t$-designs}, basic necessary
conditions can be obtained via elementary counting arguments (see,
for instance,~\cite{BJL1999}):

\begin{lemma}\label{s-design}
Let $\mathcal{D}=(X,\mathcal{B})$ be a \mbox{$t$-$(v,k,\lambda)$}
design, and for a positive integer $s \leq t$, let $S \subseteq X$
with $\left|S\right|=s$. Then the number of blocks containing
each element of $S$ is given by
\[\lambda_s = \lambda \frac{{v-s \choose t-s}}{{k-s \choose t-s}}.\]
In particular, for $t\geq 2$, a \mbox{$t$-$(v,k,\lambda)$} design is
also an \mbox{$s$-$(v,k,\lambda_s)$} design.
\end{lemma}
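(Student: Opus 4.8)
The plan is to establish the formula by a standard double-counting argument and to read off the $s$-design property as an immediate by-product. I would fix an arbitrary $s$-subset $S \subseteq X$ and, provisionally, let $\lambda_s$ denote the number of blocks containing $S$; a priori this quantity might depend on the particular choice of $S$, so the goal is both to evaluate it and to show that it does not. The device is to introduce the set of incidences $\mathcal{F}$ consisting of all pairs $(T,B)$ where $T$ is a $t$-subset of $X$ with $S \subseteq T$ and $B \in \mathcal{B}$ is a block with $T \subseteq B$. The heart of the proof is to count $\left| \mathcal{F} \right|$ in two different ways.

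Counting by first choosing $T$: the number of $t$-subsets $T$ with $S \subseteq T$ equals ${v-s \choose t-s}$, since the remaining $t-s$ points must be selected from the $v-s$ points lying outside $S$. By property (iii) of the design, each such $T$ is contained in exactly $\lambda$ blocks, whence $\left| \mathcal{F} \right| = \lambda {v-s \choose t-s}$. Counting by first choosing $B$: any admissible pair forces $S \subseteq B$, so only the $\lambda_s$ blocks through $S$ contribute. Given such a block $B$, the eligible $t$-subsets $T$ with $S \subseteq T \subseteq B$ are obtained by choosing the remaining $t-s$ points of $T$ from the $k-s$ points of $B$ outside $S$, yielding ${k-s \choose t-s}$ choices. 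Hence $\left| \mathcal{F} \right| = \lambda_s {k-s \choose t-s}$. Equating the two expressions and solving for $\lambda_s$ gives the claimed value.

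The decisive observation is that this value of $\lambda_s$ depends only on the parameters $v$, $k$, $t$, $\lambda$ and on $s$, but not on the chosen subset $S$. Consequently every $s$-subset of $X$ is contained in exactly $\lambda_s$ blocks, which is precisely the defining property of an $s$-$(v,k,\lambda_s)$ design; together with $s \leq t$ and $s \geq 1$ this settles the final assertion for $t \geq 2$.

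The only points requiring care are the two binomial counts: ensuring in the first count that the incidence condition genuinely decouples into the independent constraints $S \subseteq T$ and $T \subseteq B$, and noting in the second count that it collapses exactly to the blocks through $S$. One also needs $k \geq t$ (guaranteed by the standing hypothesis $t \leq k$) so that ${k-s \choose t-s}$ is a well-defined, nonzero denominator. I do not anticipate a serious obstacle here; once the incidence set $\mathcal{F}$ is correctly specified, the argument is purely enumerative.
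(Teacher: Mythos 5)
Your proof is correct: the double count of the incidences $(T,B)$ with $S \subseteq T \subseteq B$ gives $\lambda {v-s \choose t-s} = \lambda_s {k-s \choose t-s}$, and the resulting value's independence of the choice of $S$ is exactly what yields the $s$-design conclusion. The paper offers no proof of this lemma at all --- it is quoted as a standard counting fact with a citation to the design-theory literature --- and your argument is precisely the classical one given there, so there is nothing further to reconcile.
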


It is customary to set $r:= \lambda_1$ denoting the
number of blocks containing a given point (referring to the
`replication number' from statistical design of experiments, one of
the origins of design theory). It follows

\begin{lemma}\label{Comb_t=5}
Let $\mathcal{D}=(X,\mathcal{B})$ be a \mbox{$t$-$(v,k,\lambda)$}
design. Then the following holds:
\begin{enumerate}

\item[{(a)}] $bk = vr.$

\smallskip

\item[{(b)}] $\displaystyle{{v \choose t} \lambda = b {k \choose t}.}$

\smallskip

\item[{(c)}] $r(k-1)=\lambda_2(v-1)$ for $t \geq 2$.

\end{enumerate}
\end{lemma}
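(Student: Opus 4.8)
The plan is to establish all three identities by elementary double counting of incidences, invoking the preceding Lemma~\ref{s-design} to guarantee that the numbers $\lambda_s$ (in particular $r=\lambda_1$ and $\lambda_2$) are genuinely independent of the chosen $s$-subset, so that each count is well posed. For part~(a), I would count the incident point--block pairs $(x,B)$ with $x\in B$ in two ways. Summing over the blocks gives $bk$, since each of the $b$ blocks contains exactly $k$ points; summing over the points gives $vr$, since by definition each of the $v$ points lies in exactly $r$ blocks. Equating the two counts yields $bk=vr$.

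For part~(b), I would count the incident pairs $(T,B)$ where $T$ is a $t$-subset of $X$ with $T\subseteq B$. Summing over the $t$-subsets gives ${v \choose t}\lambda$, since each of the ${v \choose t}$ such subsets lies in exactly $\lambda$ blocks by property~(iii); summing over the blocks gives $b{k \choose t}$, since each block, being a $k$-set, contains exactly ${k \choose t}$ distinct $t$-subsets. Equating the two expressions gives the claim. Equivalently, this is the special case $s=0$ of Lemma~\ref{s-design}, in which $\lambda_0=b$.

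For part~(c), which requires $t\geq 2$ so that $\lambda_2$ is defined, I would fix a point $x$ and count the pairs $(y,B)$ with $y\neq x$ and $\{x,y\}\subseteq B$. Counting through the blocks on $x$ gives $r(k-1)$, as each of the $r$ blocks through $x$ contributes its $k-1$ remaining points $y$; counting through the points gives $(v-1)\lambda_2$, as each of the $v-1$ points $y\neq x$ forms together with $x$ a pair lying in exactly $\lambda_2$ blocks. This yields $r(k-1)=\lambda_2(v-1)$. There is no genuine obstacle in any of these arguments: every step is a routine double count, and the only point requiring care is the well-definedness of $r$ and $\lambda_2$, which is precisely what Lemma~\ref{s-design} provides.
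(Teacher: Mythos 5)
Your proof is correct, but it takes a different route from the paper. The paper offers no standalone counting argument for this lemma: it presents it as an immediate corollary of Lemma~\ref{s-design} (note the ``It follows'' in the text), i.e.\ one is meant to substitute the closed formula $\lambda_s = \lambda {v-s \choose t-s}/{k-s \choose t-s}$ with $s=0,1,2$ (with the convention $\lambda_0=b$, $\lambda_1=r$) and verify (a), (b), (c) by manipulating binomial coefficients, e.g.\ $bk = \lambda k{v \choose t}/{k \choose t} = \lambda v{v-1 \choose t-1}/{k-1 \choose t-1} = vr$. You instead re-derive each identity by a direct double count of incidences --- point--block pairs for (a), ($t$-subset)--block pairs for (b), and pairs $(y,B)$ through a fixed point $x$ for (c) --- invoking Lemma~\ref{s-design} only to guarantee that $r$ and $\lambda_2$ are well defined. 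Your approach is self-contained and arguably more transparent, since it exposes the combinatorial content rather than hiding it in binomial-coefficient algebra; the paper's route is shorter once Lemma~\ref{s-design} is in hand. One small pedantic point: your remark that (b) ``is the special case $s=0$ of Lemma~\ref{s-design}'' slightly overreaches, since that lemma as stated requires $s$ to be a \emph{positive} integer; the extension to $s=0$ via $\lambda_0=b$ is standard but is a convention you are adding, not something the lemma literally asserts. Since your primary argument for (b) is the independent double count, this does not create a gap.
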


There are many infinite classes of Steiner \mbox{$t$-designs} for $t=2$ and $3$, however for $t=4$ and
$5$ only a finite number are known. For encyclopedic accounts of key results in design theory as well as
existence tables with known parameter sets, we refer to~\cite{BJL1999,crc06}.

\begin{problem}
Does there exist any non-trivial Steiner \mbox{$t$-design} with $t \geq 6$?
\end{problem}

\section{General Authentication Codes}\label{genauth}

For \emph{general} authentication codes, no secrecy requirements are specified. We summarize the state-of-the-art with respect to our further purposes:

The following theorem (cf.~\cite{Mass86,Sch86}) gives a lower bound on the number of encoding rules for \emph{any} source probability distribution.

\begin{theorem}$[$Massey--Sch\"{o}bi$]$
If a general authentication code is $(t-1)$-fold against spoofing, then the number of encoding rules is bounded below by
\[b \geq \frac{{v \choose t}}{{k \choose t}}.\]
\end{theorem}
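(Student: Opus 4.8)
The plan is to derive the bound from a double-counting argument, once we establish the key combinatorial consequence of security: that \emph{every} $t$-subset of the message set $\mathcal{M}$ is contained in the valid message set $M(e)$ of at least one encoding rule $e$. Granting this, observe that each $M(e)$ has exactly $k$ elements, so it contains ${k \choose t}$ distinct $t$-subsets. Counting incidences between encoding rules and the $t$-subsets they cover gives $\sum_{e \in \mathcal{E}} {|M(e)| \choose t} = b\,{k \choose t}$, while the coverage property gives $\sum_{e \in \mathcal{E}} {|M(e)| \choose t} \geq {v \choose t}$, since each of the ${v \choose t}$ $t$-subsets of $\mathcal{M}$ is counted at least once. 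Combining the two relations yields $b \geq {v \choose t}/{k \choose t}$, as claimed.

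The substance of the argument is therefore the coverage claim, and this is where I would concentrate the work. First I would record the averaging form of Massey's bound: in a spoofing attack of order $i$ the opponent can always do at least as well as the average over all insertions, and since $|M(e)| = k$ for every $e$, the conditional probability that a fresh message $m' \notin M_i$ is valid averages to exactly $(k-i)/(v-i)$ over the $v-i$ candidates. This reproves $P_{d_i} \geq (k-i)/(v-i)$ and, more importantly, shows that equality forces the conditional validity probability to be identical for every candidate $m'$ and every observable $i$-set $M_i$.

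Since the code is $(t-1)$-fold secure, equality holds for all $i$ with $0 \leq i \leq t-1$. I would then chain these uniformity conditions inductively: order-$0$ tightness makes every single message valid under some encoding rule; feeding an observable $i$-set into the order-$i$ uniformity shows that every one-message extension is valid with positive conditional probability, upgrading ``every $i$-subset is covered'' to ``every $(i+1)$-subset is covered''. Telescoping the conditional validity probabilities along a fixed $t$-set $T = \{m_1,\ldots,m_t\}$ gives $\Pr[T \subseteq M(E)] = \prod_{i=0}^{t-1}(k-i)/(v-i) = {k \choose t}/{v \choose t} > 0$, so $T$ is covered; ranging over all $t$-subsets establishes the coverage claim.

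The main obstacle is the careful handling of the probabilistic conditioning in this chain, so that the conclusion holds for an \emph{arbitrary} source distribution as the statement requires. The spoofing game conditions on messages being \emph{observed}, not merely valid, and the event that a given $i$-set is observed depends on both $p_S$ and $p_E$; I would restrict attention to the supports of these distributions and verify that, under full support, a covered $i$-set is always observable, using the injectivity of encoding rules so that $i$ distinct messages pull back to $i$ distinct source states. Ensuring that the equality case of the averaging bound propagates cleanly through all $t$ orders, without the source weights distorting the forced uniformity, is the delicate point; everything after the coverage claim is the routine counting above.
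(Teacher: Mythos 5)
The paper itself does not prove this theorem; it is stated as a known result, cited to Massey and Sch\"{o}bi, so your proposal can only be compared with the classical argument --- which it reconstructs essentially faithfully. The reduction of the bound to the coverage claim (every $t$-subset of $\mathcal{M}$ lies in $M(e)$ for some $e$ with $p_E(e)>0$), followed by the double count $b\,{k \choose t} \geq {v \choose t}$, is exactly the standard Massey--Sch\"{o}bi proof, and your inductive chaining of the equality cases of the averaging bound $P_{d_i} \geq (k-i)/(v-i)$ is the standard route to coverage. Two remarks. First, the telescoping identity $\Pr[T \subseteq M(E)] = \prod_{i=0}^{t-1}(k-i)/(v-i) = {k \choose t}/{v \choose t}$ is not quite licensed as stated: the order-$i$ equality gives validity probabilities conditioned on $M_i$ being \emph{observed}, whereas the chain rule for $\Pr[T \subseteq M(E)]$ requires conditioning on $M_i$ being \emph{valid}; these two conditional distributions on encoding rules differ in general (the first weights a rule $e$ by $p_S$ applied to $e^{-1}$ of the observed messages, the second weights consistent rules by $p_E$ alone), and they coincide only for equiprobable sources. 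This imprecision is harmless for your argument, since all the induction needs is strict positivity of the conditional validity probability at each step, which you do obtain. Second, the observability issue you flag is real and is exactly where the standard model assumptions enter: one assumes (as in Stinson's formalization of spoofing) that every set of $i \leq t-1$ source states occurs with positive probability; under that convention a covered $i$-set is observable (by injectivity of encoding rules), the equality case applies to it, and the induction closes. Some such nondegeneracy hypothesis is genuinely needed for the proof, and it is implicit in the phrase ``any source probability distribution'' as the cited authors use it, so your handling of this point is appropriate rather than a gap.
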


As usual, we call a code \emph{optimal} if the number of encoding rules meets the lower bound with equality. When the source states are known to be independent and equiprobable, optimal authentication codes which are $(t-1)$-fold against spoofing can be constructed via \mbox{$t$-designs} (cf.~\cite{DeS88,Sch86,Stin90}):

\begin{theorem}$[$DeSoete--Sch\"{o}bi--Stinson$]$\label{general}
Suppose there is a \mbox{$t$-$(v,k,\lambda)$} design. Then there is an authentication code for $k$ equiprobable source states, having $v$ messages and $\lambda \cdot {v \choose t}/{k \choose t}$ encoding rules, that is $(t-1)$-fold secure against spoofing. Conversely, if there is an authentication code for $k$ equiprobable source states, having  $v$ messages and ${v \choose t}/{k \choose t}$ encoding rules, that is $(t-1)$-fold secure against spoofing, then there is a Steiner \mbox{$t$-$(v,k,1)$} design.
\end{theorem}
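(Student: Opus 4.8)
The plan is to establish both implications by moving between the incidence structure of blocks and the probabilistic behaviour of the code. In both directions I identify the $v$ messages with the points and, for each encoding rule $e$, view the valid-message set $M(e)$ as a $k$-element block. Since the source states are equiprobable and I use the encoding rules with equal probability $1/b$, the probability of observing a prescribed $i$-subset $I$ of messages, conditional on any fixed rule $e$ with $I \subseteq M(e)$, is the same, namely $\,{k \choose i}^{-1}$; hence the posterior distribution over the rules consistent with an observation $I$ is \emph{uniform}. Writing $\beta_I$ for the number of encoding rules $e$ with $I \subseteq M(e)$, the opponent's best order-$i$ payoff given $I$ is $\max_{m' \notin I}\beta_{I \cup \{m'\}}/\beta_I$, and averaging over the observation (whose probability is $\beta_I/(b{k \choose i})$) yields $P_{d_i} = \frac{1}{b{k \choose i}}\sum_I \max_{m'\notin I}\beta_{I\cup\{m'\}}$.

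For the forward direction I would take exactly one encoding rule per block, fixing for each block an arbitrary bijection from $\mathcal{S}$ onto it; by Lemma~\ref{Comb_t=5}(b) this gives $b = \lambda\,{v \choose t}/{k \choose t}$ rules. Lemma~\ref{s-design} shows that for $i \le t$ every $i$-subset lies in exactly $\lambda_i = \lambda\,{v-i \choose t-i}/{k-i \choose t-i}$ blocks, so $\beta_I = \lambda_i$ is constant and, for $i \le t-1$, $\beta_{I\cup\{m'\}} = \lambda_{i+1}$ for every $m'$. A one-line cancellation gives $\lambda_{i+1}/\lambda_i = (k-i)/(v-i)$, whence $P_{d_i} = (k-i)/(v-i)$ for all $0 \le i \le t-1$; this is precisely $(t-1)$-fold security.

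For the converse I would start from $b = {v \choose t}/{k \choose t}$ together with the equalities $P_{d_i} = (k-i)/(v-i)$ for $0 \le i \le t-1$. The counting identity $\sum_{m'\notin I}\beta_{I\cup\{m'\}} = (k-i)\beta_I$ (each of the $\beta_I$ rules through $I$ supplies its $k-i$ further valid messages) shows that the maximum of the $v-i$ numbers $\beta_{I\cup\{m'\}}$ is at least their mean $(k-i)\beta_I/(v-i)$; inserting this into the formula for $P_{d_i}$ and using $\sum_I \beta_I = b{k \choose i}$ re-derives Massey's bound. The assumption that the bound is attained \emph{with equality} forces, for every valid $i$-set $I$, the maximum to coincide with the mean, and since the maximum of a finite family equals its mean only when all its members are equal, $\beta_{I\cup\{m'\}}$ must be \emph{independent of} $m'$.

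The core of the argument is then an induction on $i$ turning this per-$I$ constancy into a global one. Assuming every $i$-set lies in exactly $\lambda_i$ blocks, the equality condition makes $\beta_{I\cup\{m'\}}$ constant in $m'$ for each $i$-set $I$, while the identity above pins that constant to $(k-i)\lambda_i/(v-i)$, a value \emph{independent of $I$}; hence every $(i+1)$-set lies in exactly $\lambda_{i+1} := (k-i)\lambda_i/(v-i)$ blocks. Starting from $\lambda_0 = b$ and iterating up to $i = t$ gives $\lambda_t = b\prod_{i=0}^{t-1}(k-i)/(v-i) = b\,{k \choose t}/{v \choose t} = 1$, so every $t$-subset of messages lies in exactly one block; in particular two rules cannot share a valid-message set (this would cover its $t$-subsets twice), so the distinct sets $M(e)$ form a Steiner $t$-$(v,k,1)$ design. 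I expect the main obstacle to be exactly this induction --- passing from the observation-by-observation equality condition to a single design parameter $\lambda_i$ valid for all $i$-sets at once --- with the reduction to a uniform posterior and the ``maximum equals mean forces equality'' step being what makes it work.
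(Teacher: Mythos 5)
The paper itself gives no proof of Theorem~\ref{general}; it is quoted as a known result from~\cite{DeS88,Sch86,Stin90}, so your attempt has to be measured against the standard argument in those sources. Your overall strategy is that standard argument, and your forward direction is correct and complete: one encoding rule per block with an arbitrary bijection, uniform key distribution, the observation that the posterior on consistent rules is uniform, and the cancellation $\lambda_{i+1}/\lambda_i=(k-i)/(v-i)$ via Lemma~\ref{s-design} give exactly $P_{d_i}=(k-i)/(v-i)$ for $0\le i\le t-1$. The max-equals-mean equality analysis and the induction on $i$ are also the right engine for the converse.

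However, your converse has a genuine gap: it silently assumes that the encoding rules of the \emph{given} code are used with equal probability. That is not part of the hypothesis. An authentication code comes equipped with an arbitrary distribution $p_E$ on the encoding rules, and the deception probabilities $P_{d_i}$ are defined with respect to that distribution; the theorem asserts the Steiner structure for \emph{any} code meeting the bound. Your formula $P_{d_i}=\frac{1}{b{k\choose i}}\sum_I\max_{m'\notin I}\beta_{I\cup\{m'\}}$ --- and with it the whole equality analysis on the \emph{counts} $\beta_I$ --- is valid only for uniform $p_E$; for non-uniform $p_E$ the posterior on the rules consistent with $I$ is not uniform and the payoff of $m'$ is not $\beta_{I\cup\{m'\}}/\beta_I$. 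The repair is to run your induction on the weights $w(I):=\sum_{\{e:\,I\subseteq M(e)\}}p_E(e)$ instead of the counts: one has $\sum_{m'\notin I}w(I\cup\{m'\})=(k-i)\,w(I)$, equality in Massey's bound forces $w(I\cup\{m'\})=\frac{k-i}{v-i}\,w(I)$ for every observable $I$ and every $m'\notin I$, and your induction (starting from $w(\emptyset)=1$, with positivity of the weights propagating observability) yields $w(T)={k\choose t}/{v\choose t}>0$ for every $t$-set $T$. Hence every $t$-set lies in at least one set $M(e)$; since the number of incidences between the $b={v\choose t}/{k\choose t}$ rules and the $t$-subsets of their valid-message sets is exactly $b{k\choose t}={v\choose t}$, every $t$-set is covered exactly once, which gives the Steiner $t$-$(v,k,1)$ design. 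Note the instructive byproduct: since each $t$-set $T$ then lies in a unique rule $e$, one gets $p_E(e)=w(T)=1/b$, i.e., the uniformity of $p_E$ that you assumed is in fact a \emph{conclusion} forced by optimality, not a hypothesis you are free to impose.
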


\section{Authentication Codes with Perfect Secrecy}\label{new}

Stinson~\cite[Thm.\,6.4]{Stin90} constructed in 1990 the first optimal authentication codes that are one-fold secure against spoofing and simultaneously achieve perfect secrecy. His constructions rely on Steiner \mbox{$2$-designs} and assume that the source states are equiprobable distributed.

\begin{theorem}$[$Stinson$]$\label{stin1}
Suppose there is a Steiner \mbox{$2$-$(v,k,1)$} design, where $v$ divides the number of blocks $b$.
Then there is an optimal authentication code for $k$ equiprobable source states, having $v$ messages and $v(v-1)/k(k-1)$ encoding rules, that is one-fold secure against spoofing and provides perfect secrecy.
\end{theorem}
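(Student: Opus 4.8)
The plan is to realise the code by the standard design-to-code construction of Theorem~\ref{general} and then to exploit the freedom in ordering each block so as to force the secrecy condition of Lemma~\ref{frequency}. First I would fix the ingredients: take the point set $X$ of the Steiner \mbox{$2$-$(v,k,1)$} design as the $v$ messages and let the $k$ source states be equiprobable. By Lemma~\ref{Comb_t=5}(b) the design has exactly $b=v(v-1)/\bigl(k(k-1)\bigr)={v \choose 2}/{k \choose 2}$ blocks, and by Lemma~\ref{s-design} every point lies in $r=(v-1)/(k-1)$ blocks, with $kb=vr$ by Lemma~\ref{Comb_t=5}(a). I would use one encoding rule per block, each being a bijection from the source states onto the $k$ points of its block, so that the entries of the encoding matrix are precisely these images.

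Whatever bijections are chosen, this is an instance of the construction in Theorem~\ref{general} for $t=2$, so the resulting code has $b={v \choose 2}/{k \choose 2}$ encoding rules, is one-fold secure against spoofing, and meets the Massey--Sch\"{o}bi bound with equality, hence is optimal. Concretely, $P_{d_0}=r/b=k/v$ because every message lies in $r$ of the $b$ equiprobable rules, and $P_{d_1}=1/r=(k-1)/(v-1)$ because, in a Steiner \mbox{$2$-design}, a second message lies in exactly one of the $r$ blocks through an observed message; neither quantity depends on the chosen orderings.

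It remains to secure perfect secrecy, and this is where the hypothesis $v\mid b$ enters. From $kb=vr$ one sees at once that $v\mid b$ is equivalent to $k\mid r$; assuming it, I set $\rho:=r/k$. By Lemma~\ref{frequency}, since the rules are equiprobable, perfect secrecy is equivalent to every message occurring equally often in each column of the encoding matrix. Writing $N(p,s)$ for the number of blocks whose rule sends source state $s$ to message $p$, I need $N(p,s)$ independent of $s$; as each point lies in $r$ blocks one has $\sum_s N(p,s)=r$, so the only possibility is $N(p,s)=r/k=\rho$ for all $p$ and $s$. Thus everything reduces to choosing the per-block bijections so that each (message, source state) pair is realised exactly $\rho$ times.

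The main obstacle is precisely this balanced choice, which I would resolve by recasting it as an edge-colouring problem. Form the bipartite incidence graph $H$ whose parts are the $b$ blocks and the $v$ points, joining a block to each of its points; then $H$ is biregular of degrees $k$ at the blocks and $r$ at the points. A colouring of the edges by the $k$ source states in which each block sees each colour exactly once is the same data as a simultaneous choice of all the bijections, and the secrecy requirement $N(p,s)=\rho$ says each point must see each colour exactly $\rho=r/k$ times. Such a colouring is an \emph{equitable} edge-colouring of $H$ with $k$ colours, and by a theorem of de Werra every bipartite graph admits one; since every degree of $H$ (namely $k$ at the blocks and $r=k\rho$ at the points) is divisible by $k$, the equitable colour counts are forced to be exactly $k/k=1$ at blocks and $r/k=\rho$ at points. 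Reading off the colour classes yields bijections with the required balance, and combining this with the two deception probabilities computed above completes the proof.
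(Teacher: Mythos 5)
Your proof is correct and follows essentially the same route as the paper's own argument (given there for the general Steiner $t$-design version, Theorem~\ref{mythm1}, of which this theorem is the case $t=2$): optimality and one-fold security via Theorem~\ref{general}, reduction of perfect secrecy via Lemma~\ref{frequency} to the condition that each message appears exactly $r/k=b/v$ times in each column, and a balanced ordering of the blocks obtained by $k$-edge-coloring the bipartite point-block incidence graph. The only cosmetic difference is that you invoke de Werra's equitable edge-coloring theorem, whereas the paper splits each point vertex into $r/k$ copies and applies the classical edge-coloring (K\"{o}nig) theorem to the resulting $k$-regular bipartite graph --- the two devices are interchangeable here.
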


Using Steiner $2$-$(\frac{q^{d+1}-1}{q-1},q+1,1)$ designs whose points and blocks
are the points and lines of projective spaces $PG(d,q)$, Stinson~\cite[Thm.\,6.5]{Stin90} constructed this way an infinite class of authentication codes with the following properties:

\begin{theorem}$[$Stinson$]$\label{stin2}
For all prime powers $q$ and for all even $d \geq 2$, there is an optimal authentication code for an equiprobable source probability distribution with $q+1$ source states, having $(q^{d+1}-1)/(q-1)$ messages, that is one-fold secure against spoofing and provides perfect secrecy.
\end{theorem}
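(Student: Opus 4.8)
The plan is to invoke Theorem~\ref{stin1} for the infinite family of Steiner \mbox{$2$-designs} furnished by projective spaces. First I would recall (as noted in the text preceding the statement) that the points and lines of $PG(d,q)$ constitute a Steiner \mbox{$2$-$(v,k,1)$ design} with $v=(q^{d+1}-1)/(q-1)$ and $k=q+1$; these parameters already match the desired number of messages and source states. It then only remains to verify the single hypothesis of Theorem~\ref{stin1}, namely that $v$ divides the number of blocks $b$, in order to inherit the claimed optimal code that is one-fold secure against spoofing and achieves perfect secrecy.

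The key step is therefore the divisibility $v\mid b$, and I expect this to be the main point of the argument, as well as the place where the restriction to even $d$ enters. By Lemma~\ref{Comb_t=5}(b) with $t=2$ and $\lambda=1$, the number of lines equals $b=v(v-1)/\bigl(k(k-1)\bigr)$, so that $b/v=(v-1)/\bigl(k(k-1)\bigr)$; hence $v\mid b$ is equivalent to $k(k-1)\mid(v-1)$, i.e.\ to $q(q+1)\mid(v-1)$. A short computation gives
\[
  v-1=\frac{q^{d+1}-1}{q-1}-1=\frac{q^{d+1}-q}{q-1}=q\,(1+q+\cdots+q^{d-1}),
\]
so the factor $q$ is automatically accounted for, and the condition collapses to $(q+1)\mid(1+q+\cdots+q^{d-1})$.

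To settle this last divisibility I would work modulo $q+1$. Since $q\equiv-1\pmod{q+1}$, each power satisfies $q^{j}\equiv(-1)^{j}$, and therefore
\[
  1+q+\cdots+q^{d-1}\equiv\sum_{j=0}^{d-1}(-1)^{j}\pmod{q+1}.
\]
This alternating sum of $d$ terms vanishes precisely when $d$ is even (and equals $1$ when $d$ is odd), which is exactly the parity hypothesis of the statement. Thus for every prime power $q$ and every even $d\ge2$ the divisibility $v\mid b$ holds, and Theorem~\ref{stin1} applies verbatim to produce the asserted code. The only genuine obstacle is the divisibility bookkeeping above; once the even-$d$ condition is identified as the precise requirement for $v\mid b$, the construction together with all of its security and secrecy properties is inherited directly from Theorem~\ref{stin1}.
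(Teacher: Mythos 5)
Your proposal is correct and follows exactly the route the paper intends: the paper presents this result as the application of Theorem~\ref{stin1} to the point-line Steiner \mbox{$2$-$(\frac{q^{d+1}-1}{q-1},q+1,1)$} designs of $PG(d,q)$, with the even-$d$ hypothesis being precisely what makes $v$ divide $b$. Your explicit verification of that divisibility (reducing it to $(q+1)\mid(1+q+\cdots+q^{d-1})$ and evaluating the alternating sum modulo $q+1$) is a correct filling-in of the step the paper leaves implicit, and it parallels the divisibility argument the paper gives for the analogous spherical-geometry construction in Theorem~\ref{mythm2}.
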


The smallest example is as follows (cf.~\cite[Ex.\,6.1]{Stin90}):

\begin{example}
An optimal authentication code for $k=3$ equiprobable source states, having $v=7$ messages, and $b=7$ encoding rules, that is one-fold secure against spoofing and provides perfect secrecy can be constructed from a Steiner \mbox{$2$-$(7,3,1)$ design}, i.e. the unique Fano plane illustrated in Fig.~\ref{fano}. Each encoding rule is used with probability $1/7$. An encoding matrix is given in Table~\ref{STS}.

\begin{table}
\renewcommand{\arraystretch}{1.3}
\caption{Authentication code from the Fano plane}\label{STS}

\begin{center}

\begin{tabular}{|c| c c c|}
  \hline
  & $s_1$ & $s_2$ & $s_3$ \\
  \hline
  $e_1$ & 1 & 2 & 4 \\
  $e_2$ & 2 & 3 & 5 \\
  $e_3$ & 3 & 4 & 6 \\
  $e_4$ & 4 & 5 & 7 \\
  $e_5$ & 5 & 6 & 1 \\
  $e_6$ & 6 & 7 & 2 \\
  $e_7$ & 7 & 1 & 3 \\
  \hline
\end{tabular}

\end{center}
\end{table}

\end{example}

We will extend Stinson's constructions to obtain optimal codes which are multi-fold secure against spoofing and provide perfect secrecy. This can be achieved by means of Steiner \mbox{$t$-designs} for larger $t$.

\begin{theorem}\label{mythm1}
Suppose there is a Steiner \mbox{$t$-$(v,k,1)$} design, where $v$ divides the number of blocks $b$.
Then there is an optimal authentication code for $k$ equiprobable source states, having $v$ messages and ${v \choose t}/{k \choose t}$
encoding rules, that is $(t-1)$-fold secure against spoofing and provides perfect secrecy.
\end{theorem}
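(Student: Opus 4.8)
The plan is to build the authentication code directly from the blocks of the Steiner $t$-$(v,k,1)$ design and then verify the three required properties: that it is optimal, that it is $(t-1)$-fold secure against spoofing, and that it achieves perfect secrecy. The messages will be the $v$ points of the design, and each encoding rule will correspond to a block together with an ordering of its $k$ points into the $k$ source states. First I would recall that by Theorem~\ref{general} (DeSoete--Sch\"{o}bi--Stinson), a Steiner $t$-$(v,k,1)$ design already yields an authentication code for $k$ equiprobable source states with $v$ messages and exactly ${v \choose t}/{k \choose t}$ encoding rules that is $(t-1)$-fold secure against spoofing, and this number meets the Massey--Sch\"{o}bi bound with equality, so the code is optimal. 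The only genuinely new thing to establish is perfect secrecy, and for this the hypothesis that $v \mid b$ will be essential.

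The key step is to choose the orderings of the points within each block so that, with all encoding rules used with equal probability $1/b$, every point (message) appears equally often in every column (source state) of the encoding matrix; by Lemma~\ref{frequency} this is exactly the condition for perfect secrecy when the encoding rules are equiprobable. Here I would invoke Lemma~\ref{s-design} with $s=1$: each point of the design lies in exactly $r = \lambda_1 = {v-1 \choose t-1}/{k-1 \choose t-1}$ blocks. Since the total number of block-slots is $bk$ and there are $v$ points each appearing $r$ times, the relation $bk = vr$ from Lemma~\ref{Comb_t=5}(a) guarantees that the $bk$ entries of the encoding matrix distribute $r$ copies of each point across the whole matrix. The goal is to arrange these so that each of the $k$ columns contains each point exactly $r/k$ times; this requires $k \mid r$, equivalently $v \mid b$ (again via $bk=vr$), which is precisely the divisibility hypothesis. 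Thus the hypothesis $v \mid b$ is what makes the uniform-column distribution arithmetically possible.

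The main obstacle is to show that a suitable system of block-orderings actually \emph{exists} — that is, to pass from the counting condition $k \mid r$ to a concrete assignment realizing the uniform distribution across columns. I would handle this combinatorially: consider the bipartite-type incidence structure of (point, block) flags and seek a proper assignment of each flag to one of the $k$ column-positions so that each point receives each position the same number of times. One clean way is to observe that for each fixed point $x$, the $r$ blocks through $x$ can have their $k$ positions assigned cyclically or via an edge-coloring argument so that $x$ lands in each column exactly $r/k$ times; applying a global consistency argument (for instance a Hall-type or resolvability argument, or simply a cyclic relabeling as Stinson does in the $t=2$ case) yields orderings that simultaneously balance all points. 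Once such orderings are fixed, each column of the encoding matrix contains every message exactly $r/k$ times, so the frequency condition of Lemma~\ref{frequency} holds and perfect secrecy follows, completing the proof.
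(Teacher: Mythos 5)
Your overall strategy coincides with the paper's: optimality and $(t-1)$-fold security against spoofing come from Theorem~\ref{general}, perfect secrecy reduces via Lemma~\ref{frequency} (with equiprobable encoding rules) to making every message appear equally often in every column of the encoding matrix, and the hypothesis $v \mid b$ is exactly the arithmetic condition $k \mid r$ (since $bk = vr$, one has $r/k = b/v$) that makes such a balance numerically possible. Up to this point your argument is correct and matches the paper.

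The gap is at the decisive step: you assert, but do not prove, that block orderings realizing the balance actually exist. Your suggested route --- first assign positions to the $r$ blocks through each fixed point $x$ so that $x$ is balanced, then invoke ``a global consistency argument'' --- fails as stated, because assignments made independently for different points will in general collide inside a block (two points of the same block demanding the same position), and repairing those collisions is precisely the whole difficulty; a cyclic relabeling only works for designs with a suitable automorphism, not in general. Note also that the object required is \emph{not} a proper edge-coloring of the point-block incidence graph: a point has degree $r > k$, so it must see repeated colors; what is needed is a $k$-coloring of the edges that is proper on the block side (each block sees each of the $k$ positions exactly once) and \emph{equitable} on the point side (each point sees each position exactly $r/k$ times), so K\"onig's edge-coloring theorem cannot be applied directly. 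The paper closes this gap with a specific device: split each point vertex $x$ into $r/k$ copies, distributing its $r$ incident edges so that every copy has degree exactly $k$; the resulting bipartite graph is $k$-regular and therefore admits a proper $k$-edge-coloring (equivalently, it decomposes into $k$ perfect matchings); this coloring gives each block one edge of each color and each point-copy one edge of each color, so after collapsing the copies every point occupies each of the $k$ positions in exactly $r/k = b/v$ blocks. Your write-up needs this splitting argument (or an equivalent one) to become a complete proof.
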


\begin{proof}
Let $\mathcal{D}=(X,\mathcal{B})$ be a Steiner \mbox{$t$-$(v,k,1)$} design, where  $v$ divides $b$. We mimic the proof of \cite[Thm.\,6.4]{Stin90}. Clearly, the authentication capacity of the code follows via Theorem~\ref{general}. To establish perfect secrecy under the assumption that the encoding rules are used with equal probability, it is necessary in view of Lemma~\ref{frequency} that a given message occurs with the same frequency in each column of the resulting encoding matrix.
This can be done by ordering every block of $\mathcal{D}$ in such a way that every point occurs in each possible position in precisely $b/v$ blocks.
Since every point occurs in exactly $r={v-1 \choose t-1}/{k-1 \choose t-1}$ blocks due to Lemma~\ref{Comb_t=5}~(c), necessarily clearly $k$ must divide $r$. To show that the condition is also sufficient, we consider the bipartite point-block incidence graph of $\mathcal{D}$ with vertex set $X \cup \mathcal{B}$, where $(x,B)$ is an edge if and only if $x \in B$  for $x \in X$ and $B \in \mathcal{B}$. An ordering on each block of $\mathcal{D}$ can be obtained via an edge-coloring of this graph using $k$ colors in such a way that each vertex $B \in \mathcal{B}$ is adjacent to one edge of each color, and each vertex $x \in X$ is adjacent to $b/k$ edges of each color. Technically, this can be achieved by first splitting up each vertex $x$ into $b/k$ copies, each having degree $k$, and then by finding an appropriate edge-coloring of the resulting $k$-regular bipartite graph using $k$ colors. Taking the  ordered blocks as encoding rules, each used with equal probability, establishes the claim.
\end{proof}

Relying on M\"{o}bius planes, and more generally on spherical geometries, we can now construct a new infinite class of optimal codes which are two-fold secure against spoofing and achieve perfect secrecy.

\begin{theorem}\label{mythm2}
For all prime powers $q$ and for all even $d \geq 2$, there is an optimal authentication code for an equiprobable source probability distribution with $q+1$ source states, having $q^d+1$ messages, that is two-fold secure against spoofing and provides perfect secrecy.
\end{theorem}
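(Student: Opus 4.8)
The plan is to invoke Theorem~\ref{mythm1} with $t=3$, since a code that is two-fold secure against spoofing requires $t-1=2$. Concretely, I need a Steiner \mbox{$3$-$(q^d+1,q+1,1)$} design whose block number $b$ is divisible by $v=q^d+1$. The geometric input is the family of spherical geometries (inversive spaces): taking the projective line $PG(1,q^d)=GF(q^d)\cup\{\infty\}$ as point set, with $q^d+1$ points, and the orbit of the subline $PG(1,q)=GF(q)\cup\{\infty\}$, a $(q+1)$-subset, under the sharply $3$-transitive group $PGL(2,q^d)$ as block set, one obtains a \mbox{$3$-$(q^d+1,q+1,1)$} design. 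For $d=2$ these are precisely the Miquelian M\"{o}bius (inversive) planes, and for larger even $d$ the higher spherical geometries. A short count confirms $\lambda=1$: sharp $3$-transitivity makes the structure a $3$-design, and since the setwise stabilizer of the subline in $PGL(2,q^d)$ is exactly $PGL(2,q)$, comparing the total $b\cdot{q+1 \choose 3}$ with ${q^d+1 \choose 3}$ forces $\lambda=1$.

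The crux of the argument, and the reason for restricting to \emph{even} $d$, is the verification of the divisibility hypothesis of Theorem~\ref{mythm1}. By Lemma~\ref{Comb_t=5}~(a) we have $bk=vr$, so $v\mid b$ is equivalent to $k\mid r$. Here $k=q+1$, and by Lemma~\ref{s-design},
\[
r = \frac{{q^d \choose 2}}{{q \choose 2}} = q^{d-1}\,\frac{q^d-1}{q-1}.
\]
Since $\gcd(q+1,q^{d-1})=1$, the condition reduces to $(q+1)\mid \frac{q^d-1}{q-1}=\sum_{i=0}^{d-1}q^i$. Working modulo $q+1$, where $q\equiv -1$, this sum is congruent to $\sum_{i=0}^{d-1}(-1)^i$, which equals $0$ when $d$ is even and $1$ when $d$ is odd. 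Hence $v\mid b$ holds precisely for even $d$, matching the stated range exactly.

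With both ingredients in place, Theorem~\ref{mythm1} applied to this Steiner \mbox{$3$-$(q^d+1,q+1,1)$} design immediately produces an optimal authentication code for $q+1$ equiprobable source states, having $q^d+1$ messages and ${q^d+1 \choose 3}/{q+1 \choose 3}$ encoding rules, that is two-fold secure against spoofing and provides perfect secrecy. I expect the main obstacle to be the clean establishment of the spherical geometries with exactly these parameters, i.e.\ the geometric and group-theoretic step identifying them as Steiner \mbox{$3$-designs}; by contrast, the divisibility computation, although it is the conceptual heart that pins down the evenness of $d$, is routine once the formula for $r$ is substituted.
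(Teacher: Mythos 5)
Your proposal is correct and follows essentially the same route as the paper: the same Witt/spherical-geometry construction of the Steiner \mbox{$3$-$(q^d+1,q+1,1)$} designs (orbit of $GF(q)\cup\{\infty\}$ under $PGL(2,q^d)$), followed by an application of Theorem~\ref{mythm1}. The only cosmetic difference is the divisibility check: the paper observes directly that $q^2-1 \mid q^d-1$ for even $d$, hence $(q+1)q(q-1) \mid q^d(q^d-1)$, whereas you reduce $v \mid b$ to $k \mid r$ via Lemma~\ref{Comb_t=5}~(a) and compute modulo $q+1$, which as a small bonus also shows the condition fails for odd $d$.
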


\begin{proof}
Steiner designs can be constructed from spherical geometries as follows:
Let $q$ be a prime power, and $d \geq 2$ an integer. As point set $X$ choose the elements of the
projective line \mbox{$GF(q^d) \cup \{\infty\}$} over the Galois field $GF(q^d)$, where $\infty$ denotes a symbol with $\infty \notin GF(q^d)$. The
linear fractional group
\[PGL(2,q^d)=\{x \mapsto \textstyle{\frac{ax+b}{cx+d}}: a,b,c,d \in
GF(q^d), ad-bc \not= 0 \}\] acts on \mbox{$GF(q^d) \cup \{\infty\}$}
in the natural manner (with the usual conventions for $\infty$). As
block set $\mathcal{B}$ take the images of \mbox{$GF(q) \cup \{\infty\}$} under
$PGL(2,q^d)$. This gives a \mbox{$3$-$(q^d+1,q+1,1)$} design with
$PGL(2,q^d)$ as group of automorphisms. These designs were first described by
Witt~\cite{Witt38}. For $d=2$, they are often called \emph{M\"{o}bius planes}
(or \emph{inversive planes}) of order $q$.

Now, if we assume that $d$ is even, then \[q^2-1 \mid q^d-1\] and hence \[(q+1)q(q-1) \mid q^d(q^d-1).\] Therefore, $v$ divides $b$, and the claim follows by applying Theorem~\ref{mythm1}.
\end{proof}

We present the smallest example:

\begin{example}
An optimal authentication code for $k=4$ equiprobable source states, having $v=10$ messages, and $b=30$ encoding rules, that is two-fold secure against spoofing and provides perfect secrecy can be constructed from a Steiner \mbox{$3$-$(10,4,1)$ design}, i.e. the unique M\"{o}bius plane of order $3$. Each encoding rule is used with probability $1/30$. We give an encoding matrix in Table~\ref{SQS}.

\begin{table}
\renewcommand{\arraystretch}{1.3}
\caption{Authentication code from the M\"{o}bius plane of order $3$}\label{SQS}

\begin{center}

\begin{tabular}{|c| c c c c|}
  \hline
  & $s_1$ & $s_2$ & $s_3$ & $s_4$\\
  \hline
     $e_1$ & 1 & 2 & 4 & 5\\
     $e_2$ & 2 & 3 & 5 & 6\\
     $e_3$ & 3 & 4 & 6 & 7\\
     $e_4$ & 4 & 5 & 7 & 8\\
     $e_5$ & 5 & 6 & 8 & 9\\
     $e_6$ & 6 & 7 & 9 & 0\\
     $e_7$ & 7 & 8 & 0 & 1\\
     $e_8$ & 8 & 9 & 1 & 2\\
     $e_9$ & 9 & 0 & 2 & 3\\
  $e_{10}$ & 0 & 1 & 3 & 4\\
  $e_{11}$ & 1 & 2 & 3 & 7\\
  $e_{12}$ & 2 & 3 & 4 & 8\\
  $e_{13}$ & 3 & 4 & 5 & 9\\
  $e_{14}$ & 4 & 5 & 6 & 0\\
  $e_{15}$ & 5 & 6 & 7 & 1\\
  $e_{16}$ & 6 & 7 & 8 & 2\\
  $e_{17}$ & 7 & 8 & 9 & 3\\
  $e_{18}$ & 8 & 9 & 0 & 4\\
  $e_{19}$ & 9 & 0 & 1 & 5\\
  $e_{20}$ & 0 & 1 & 2 & 6\\
  $e_{21}$ & 1 & 3 & 5 & 8\\
  $e_{22}$ & 2 & 4 & 6 & 9\\
  $e_{23}$ & 3 & 5 & 7 & 0\\
  $e_{24}$ & 4 & 6 & 8 & 1\\
  $e_{25}$ & 5 & 7 & 9 & 2\\
  $e_{26}$ & 6 & 8 & 0 & 3\\
  $e_{27}$ & 7 & 9 & 1 & 4\\
  $e_{28}$ & 8 & 0 & 2 & 5\\
  $e_{29}$ & 9 & 1 & 3 & 6\\
  $e_{30}$ & 0 & 2 & 4 & 7\\
  \hline
\end{tabular}

\end{center}
\end{table}

\end{example}

\begin{remark}
We mention that the group $PGL(2,q^d)$ acts transitively on incident point-block
pairs, i.e. on the \emph{flags}, of the Steiner \mbox{$3$-$(q^d+1,q+1,1)$} design.
This reveals a high degree of regularity of the combinatorial structure. Basically all flag-transitive Steiner $t$-designs have been
determined recently, see~\cite{Hu2008}. Various further interactions between highly regular combinatorial structures and applications in information and coding theory can be found, e.g., in~\cite{Hu2009}.
\end{remark}

\section{Further Constructions}\label{table}

We will construct several further new optimal authentication codes for equiprobable source distributions, which are $(t-1)$-fold secure against spoofing and simultaneously achieve perfect secrecy. All codes with $t \geq 3$ appear to be the first authentication codes satisfying these properties.

In view of Theorem~\ref{mythm1}, we have to check whether the parameters of known Steiner \mbox{$t$-$(v,k,1)$} designs satisfy the condition that $v$ divides the number of blocks $b={v \choose t}/{k \choose t}$.\linebreak
We recall that there are two infinite classes of optimal authentication codes: one arises from projective geometries (Theorem~\ref{stin2}) and the other from spherical geometries (Theorem~\ref{mythm2}). We can construct further infinite families of optimal codes for a \emph{fixed} number of source states as follows:
\begin{itemize}
\item A Steiner \mbox{$2$-$(v,3,1)$} design (so-called \emph{Steiner triple system}) exists if and only if $v \equiv 1$ or $3$ (mod $6$). Hence, if $v \equiv 1$ (mod $6$), then an optimal authentication code can be constructed for $k=3$ equiprobable source states, having $v$ messages, and $v(v-1)/6$ encoding rules, that is one-fold secure against spoofing and provides perfect secrecy.

\item A Steiner \mbox{$2$-$(v,4,1)$} design exists if and only if $v \equiv 1$ or $4$ (mod $12$). Hence, if $v \equiv 1$ (mod $12$), then an optimal authentication code can be constructed for $k=4$ equiprobable source states, having $v$ messages, and $v(v-1)/12$ encoding rules, that is one-fold secure against spoofing and provides perfect secrecy.

\item A Steiner \mbox{$2$-$(v,5,1)$} design exists if and only if $v \equiv 1$ or $5$ (mod $20$). Hence, if $v \equiv 1$ (mod $20$), then an optimal authentication code can be constructed for $k=5$ equiprobable source states, having $v$ messages, and $v(v-1)/20$ encoding rules, that is one-fold secure against spoofing and provides perfect secrecy.

\item A Steiner \mbox{$3$-$(v,4,1)$} design (so-called \emph{Steiner quadruple system}) exists if and only if $v \equiv 2$ or $4$ (mod $6$). Hence, if $v \equiv 2$ (mod $24$), then an optimal authentication code can be constructed for $k=4$ equiprobable source states, having $v$ messages, and $v(v-1)(v-2)/24$ encoding rules, that is two-fold secure against spoofing and provides perfect secrecy.
\end{itemize}

We present further optimal codes that are $(t-1)$-fold secure against spoofing and achieve perfect secrecy in Table~\ref{t-des}.
We give the parameters of the authentication codes as well as of the respective Steiner \mbox{$t$-designs}.
All presently known Steiner \mbox{$4$-designs} and \mbox{$5$-designs} have been examined; for Steiner \mbox{$2$-designs} and \mbox{$3$-designs} only the cases up to $v=30$ have been investigated. We refer to~\cite{BJL1999,crc06} for further information on the respective designs.

\begin{table}
\renewcommand{\arraystretch}{1.3}
\caption{Further authentication codes from Steiner $t$-designs}\label{t-des}

\begin{center}
\begin{tabular}{|c||c c c| c|c|}
  \hline
  $t$ & $k$ & $v$ & $b$ & \mbox{Design Parameters} & \mbox{Design Reference}\\
  \hline \hline
3 & 5  & 26 & 260 & $3$-$(26,5,1)$ & \mbox{Denniston design}  \\
   \hline
   & 5  & 11 & 66 & $4$-$(11,5,1)$ & \mbox{Witt design} \\
   & 7  & 23 & 253 & $4$-$(23,7,1)$ & \mbox{Witt design} \\
   & 5  & 23 & 1.771 & $4$-$(23,5,1)$ & \mbox{Denniston design}\\
   & 5  & 47 & 35.673 & $4$-$(47,5,1)$ & \mbox{Denniston design} \\
 4 & 5  & 83 & 367.524 & $4$-$(83,5,1)$ & \mbox{Denniston design} \\
   & 5  & 71 & 194.327 & $4$-$(71,5,1)$ & \mbox{Mills design} \\
   & 5  & 107 & 1.032.122 & $4$-$(107,5,1)$ & \cite{crc06} \\
   & 5  & 131 & 2.343.328 & $4$-$(131,5,1)$ & \cite{crc06} \\
   & 5  & 167 & 6.251.311 & $4$-$(167,5,1)$ & \cite{crc06} \\
   & 5  & 243 & 28.344.492 & $4$-$(243,5,1)$ & \cite{crc06} \\
   \hline
   & 6  & 12 & 132 & $5$-$(12,6,1)$ & \mbox{Witt design}  \\
5  & 6  & 84 & 5.145.336 & $5$-$(84,6,1)$ & \mbox{Denniston design} \\
   & 6  & 244 & 1.152.676.008 & $5$-$(244,6,1)$ &  \cite{crc06} \\
  \hline
\end{tabular}
\end{center}
\end{table}

\section{Discussion}\label{disc}

It would be interesting for further research to examine authentication codes that are $(t-1)$-fold secure against spoofing and achieve two-fold perfect secrecy, and more generally $(t-1)$-fold perfect secrecy. For this the following condition must be satisfied: for every $t^* \leq t-1$, for every set $M^*$ of $t^*$ messages observed in the channel, and for every set $S^*$ of $t^*$ source states, we have $p(S^*|M^*)=p(S^*)$.

\section*{Acknowledgment}

The author would like to thank Doug Stinson for helpful discussions.
The author gratefully acknowledges support of his work by the Deutsche
Forschungsgemeinschaft (DFG) via a Heisenberg grant (Hu954/4) and a Heinz Maier-Leibnitz Prize grant (Hu954/5).

\end{document}